\newtheorem{lemma}{Lemma}[section]
\newtheorem{proposition}{Proposition}[section]
\newcommand{\cK}{{\cal K}}
\newcommand{\cP}{{\cal P}}
\newcommand{\rnp}{\rho}    
\newlength{\IndentI}
\newlength{\IndentII}
\newlength{\IndentIII}
\newlength{\WidthI}
\newlength{\WidthII}
\newlength{\WidthIII}
\title{Multistep Bayesian strategy in coin-tossing games and its 
application to asset trading games in continuous time}
\author{
  Kei Takeuchi\\
  Emeritus, Graduate School of Economics\\
  University of Tokyo\\
  Masayuki Kumon\\
  Risk Analysis Research Center\\
  Institute of Statistical Mathematics\\
  and\\
  Akimichi Takemura\\
  Graduate School of Information Science and Technology\\
  University of Tokyo}
\date{February, 2008}
\begin{document}
\maketitle

\begin{abstract}
  We study multistep Bayesian betting strategies in coin-tossing
  games in the framework of game-theoretic probability of Shafer and
  Vovk (2001).  We show that by a countable mixture of these 
  strategies, a gambler or an investor 
  can exploit arbitrary patterns of deviations of nature's
  moves from 
  independent Bernoulli trials.  We then 
  apply our scheme to asset trading games in continuous time and
  derive the exponential growth rate of the investor's capital when the
  variation exponent of the asset price path deviates from two.
\end{abstract}

\noindent
{\it Keywords and phrases:} 
Beta-binomial distribution, 
H\"older exponent, 
Kullback divergence, 
randomness,
risk neutral probability,
universal prior.

\section{Introduction}
\label{sec:intro}

The field of game-theoretic probability and finance established by
Shafer and Vovk \cite{shafer/vovk:2001} has been rapidly developing in
many directions.
The present authors have been contributing to this exciting new field
by focusing mainly on explicit strategies of the
gambler and the growth rate of his capital
(\cite{takeuchi:2004}, \cite{takeuchi:2004b}, \cite{kumon-takemura},
\cite{ktt1}, \cite{ktt2}, 
\cite{takemura-suzuki}, \cite{horikoshi-takemura}, \cite{tkt1}).
Following the terminology of 
Shafer and Vovk \cite{shafer/vovk:2001}, 
we  refer to the gambler 
as Skeptic 
or Investor (Section \ref{sec:preliminary asset})
and refer  to nature as 
Reality 
or Market (Section \ref{sec:preliminary asset}).

In this paper we extend the results of \cite{ktt1} and \cite{tkt1}
by considering multistep Bayesian strategies. 
In \cite{ktt1} we considered a class of Bayesian
strategies for Skeptic in coin-tossing games, strategies 
that were based only on the past average of Reality's moves.
We proved the important fact that if Skeptic uses a Bayesian strategy
and Reality violates the strong law of large numbers (SLLN), then the
exponential growth rate of the Skeptic's capital process is very
accurately described in terms of the Kullback divergence between the
average of Reality's moves when she violates SLLN and the average when
she observes SLLN.  
Furthermore in \cite{tkt1} we applied Bayesian strategies for
coin-tossing games to asset trading games in continuous time.
If we  discretize a continuous-time game by an equi-spaced grid in the
state space (i.e.\ the vertical axis), then the continuous time game can be 
approximated by an embedded coin-tossing game.  Thus the results of
discrete time coin-tossing games can be applied to continuous-time
games.  
In particular we gave a proof of ``$\sqrt{{\rm d}t}$-effect'', i.e., 
Investor  can force Market  to choose a price path with
variation exponent equal to two, within an arbitrary small constant.

More generally, discretization of continuous-time game in \cite{tkt1}
is based on the requirement that Investor choose a countable number
of discrete stopping times against a continuous path chosen by Market.
This approach allows us to formulate and study continuous-time games
in game-theoretic probability within the conventional theory of
analysis, whereas in the book by Shafer and Vovk continuous-time games
were formulated as limits of discrete time games using 
nonstandard analysis.  Vovk has taken up this formulation
and is currently rapidly developing it  in 
\cite{vovk-cont-1}, \cite{vovk-cont-2}, \cite{vovk-cont-3}.  By these 
works of Vovk it has 
now become clear that many measure-theoretic results on
continuous-time stochastic processes can be more directly derived in
the framework of game-theoretic probability.  It should be emphasized
that the game-theoretic approach is advantageous because no probabilistic
assumptions on the paths are imposed a priori.  Instead, a stochastic
behavior of Market results from the protocol of the game.  This has a
far-reaching conceptual implications for the emergence of probability.

In the coin-tossing games Reality may deviate from independent
Bernoulli trials in a subtle way without violating the strong law of
large numbers.  For example, Reality could choose a deterministic
sequence where heads and tails alternate.  Then SLLN holds for this
kind of paths and they can not be prevented by Bayesian strategies in 
\cite{ktt1}.  We could use  ``contrarian'' strategies 
(\cite{horikoshi-takemura}) based on the
past average of Reality's moves for preventing this kind of paths.
However a more natural approach is to model autocorrelations between
successive moves of Reality. Note that the deterministic sequence of
alternating heads and tails can be regarded as a sequence with
the first-order autocorrelation of $-1$.  These considerations
naturally lead to multistep Bayesian strategies of the present paper.
Then by a countable mixture of these strategies Skeptic can
detect and exploit arbitrary patterns of deviations
from the sequence of  independent Bernoulli trials.

Our results have close relations to the ones in many fields.  Universal
source coding has been extensively studied in information theory
(e.g.\ Han and Kobayashi \cite{han-kobayashi}) and the equivalence of
source coding and betting is discussed in Cover and Thomas
\cite{cover-thomas}.  
Various notions of randomness have been studied
from the viewpoint of Kolmogorov complexity (e.g.\ Li and Vit{\'a}nyi
\cite{li-vitanyi}, Lambalgen \cite{lambalgen})  
and there exists an extensive literature on algorithmic theory of
randomness.  See two forthcoming books on algorithmic randomness by  
Downey and Hirschfeldt \cite{downey-hirschfeldt} and by
Nies \cite{nies}.
We discuss these relations in Section \ref{sec:discussions}.
 
The organization of this paper is as follows.  In Section
\ref{sec:preliminary} we set up notations for the coin-tossing game
and give some preliminary results.  
In Section \ref{sec:patterns} we consider two types of Reality's moves
which suggest deviations from independent Bernoulli trials. The first
is a block type pattern  and the second is
a Markovian type pattern.  We construct Skeptic's Bayesian
strategies which can exploit these non-randomnesses.  We show that
it is asymptotically always advantageous to exploit higher order
patterns in Reality's moves.
In Section \ref{sec:preliminary asset} we consider asset trading games in
continuous time and investigate the consequences of
high-frequency block type strategies and Markovian strategies. We derive
the exponential growth rate of Investor's capital process
when Market chooses a path with variation exponent not equal to two.
Finally in Section \ref{sec:discussions} we discuss
some aspects of our results and their implications for related fields.

\section{Preliminaries on coin-tossing games and Bayesian strategies}
\label{sec:preliminary}

In this section we summarize preliminary results  on coin-tossing games
and capital process of Bayesian strategies (\cite{shafer/vovk:2001},
\cite{ktt1}).  We also discuss  a one-to-one correspondence between
the set of probability distributions on the set of Reality's paths and the set
of Skeptic's strategies.  Finally we note the convexity of the
Kullback divergence with respect to its first argument.

In this paper we consider the coin-tossing game in the following form. In
the protocol the success probability $0<\rnp<1$ is given.   

\medskip
\noindent
\textsc{Coin-Tossing Game} \\
\textbf{Protocol:}

\parshape=6
\IndentI   \WidthI
\IndentI   \WidthI
\IndentII  \WidthII
\IndentII  \WidthII
\IndentII  \WidthII
\IndentI   \WidthI
\noindent
${\cK}_0 :=1$.\\
FOR  $n=1, 2, \dots$:\\
  Skeptic announces $M_n\in{\mathbb R}$.\\
  Reality announces $x_n\in \{0,1\}$.\\
  ${\cK}_n = {\cK}_{n-1} + M_n (x_n-\rnp)$.\\
END FOR

\medskip
If we write $M_n=M_n^1 - M_n^0$, then
$M_n(x_n-\rnp)$ can be rewritten as
\begin{equation}
\label{eq:redundant}
M_n(x_n-\rnp)= M_n^1 (x_n-\rnp) + M_n^0 \big((1-x_n) - (1-\rnp)\big).
\end{equation}
In this case we say that Skeptic bets $M_n^1$ on $x_n=1$  and
$M_n^0$ on $x_n=0$. Although  (\ref{eq:redundant}) is a redundant
expression, generalizations to multistep protocols in the next
section can be more transparently  understood in this form.
A path $\xi=x_1 x_2 \dots$ is an infinite sequence of Reality's moves
and the sample space $\Xi = \{\xi\}=\{0,1\}^\infty$ is the set of
paths.  $\xi^n = x_1 \dots x_n$ denotes the partial path of Reality's
moves up to round $n$.
Throughout this paper we use the notation
\[
s_n=x_1 + \dots + x_n, \qquad
\bar x_n =  \frac{s_n}{n}.
\]

Skeptic's strategy $\cal P$ is a set of functions
$
{\cal P} :  \xi^{n-1}\mapsto M_n 
$
which determines Skeptic's move $M_n$ at round $n$ based on Reality's moves
up to the previous round $\xi^{n-1}=x_1 \dots x_{n-1}$.   Given a
strategy $\cP$ 
\[
\cK_n^{\cP}(\xi)= \cK_0 + \sum_{i=1}^n M_i(\xi^{i-1}) (x_i - \rnp)
\]
denotes Skeptic's capital process when he uses $\cP$, 
starting with the initial capital of $\cK_0=1$.
Following the terminology in \cite{vovk-cont-1} we call  $\cP$ {\em prudent}
if $\cK_n^\cP(\xi) \ge 0$ for all $\xi$ and $n$, i.e., Skeptic's
capital is never negative irrespective of the moves of Reality.
We also say that Skeptic observes his collateral duty if he uses
a prudent strategy.
In this paper we require  that Skeptic's strategies are prudent.

We consider probability distributions on the set of paths
$\Xi=\{0,1\}^\infty$.  
In our framework
a probability distribution $Q$ on $\Xi$
is just a collection of consistent discrete probability distributions
$Q= \{Q_n ; n \ge 1\}$, where $Q_n$ is a discrete probability
distribution on $\Xi_n = \{0,1\}^n$  satisfying the  consistency condition
\begin{equation}
\label{eq:consistency}
Q_n(\xi^n)=Q_{n+1}(\xi^n 0) + Q_{n+1}(\xi^n 1), \qquad \forall n,
\forall \xi^n.
\end{equation}
Note that 
we do not need measure-theoretic extension of $Q$ to a
probability measure on a $\sigma$-field of $\Xi$.
Under the distribution $Q$, the conditional probability of $x_n=1$ 
given $\xi^{n-1} =x_1 \dots x_{n-1}$ with $Q_{n-1} (\xi^{n-1})>0$ is written as
\begin{equation}
\label{eq:conditional-probability}
p_n^Q = p_n^Q(\xi^{n-1})= 
\frac{Q_n (\xi^{n-1}1)}{Q_{n-1} (\xi^{n-1})}.
\end{equation}
If $Q_{n-1} (\xi^{n-1})=0$, then $p_n^Q$ is not defined.
We call the probability distribution of i.i.d.\ Bernoulli trials with success
probability $\rnp$
the {\em risk neutral measure} of the coin-tossing game.

Given a probability distribution $Q$, define a strategy $\cP=\cP_Q$
by
\begin{equation}
\label{eq:QtoP}
\cP_Q : \xi^{n-1}\mapsto M_n = \cK_{n-1} \frac{p_n^Q-
  \rnp}{\rnp(1-\rnp)}.
\end{equation}
A motivation of this definition is given in Appendix.
If we write $M_n=M_n^1-M_n^0$ as in (\ref{eq:redundant}), then 
$M_n^1$, $M_n^0$ are given as
\begin{equation}
\label{eq:redundant-M}
M_n^1 = \cK_{n-1}\frac{p_n^Q}{\rnp},  \quad
M_n^0 = \cK_{n-1}\frac{1-p_n^Q}{1-\rnp}.
\end{equation}
The capital process of $\cP_Q$ is explicitly written as follows
(\cite[Theorem 4.1]{ktt1}).
\begin{equation}
\label{eq:capital-Q}
\cK_n^{\cP_Q}(\xi^n) = \frac{Q(\xi^n)}{ \rnp^{s_n} (1-\rnp)^{n-s_n}}. 
\end{equation}
This is the likelihood ratio of $Q$ to the risk neutral measure at the 
realized  path $\xi^n$ up to round $n$.
In Appendix, we establish a one-to-one correspondence between
the set of probability distributions on 
$\Xi$ and the set of prudent strategies.
Therefore capital process of a prudent strategy
can always be expressed as (\ref{eq:capital-Q}).

In particular if we employ  a beta-binomial distribution 
\begin{equation}
\label{eq:beta-binomial}
Q(\xi^n)=\frac{\Gamma(a+b) \Gamma(a+s_n)\Gamma(b+n-s_n)}
   {\Gamma(a+b+n)\Gamma(a)\Gamma(b)} 
 \qquad \text{with}\ \ p_n^Q=\frac{a+s_{n-1}}{a+b+n-1}, 
\end{equation}
where $a,b>0$ are hyperparameters of the prior distribution, then by a
simple application of Stirling's formula, the resulting capital
process  denoted by $\cK_n^0$ behaves as
\begin{equation}
\label{eq:bayes1}
\log \cK_n^0= n D( \bar x_n \| \rnp) - O(\log n),
\end{equation}
where 
\[
D(p\Vert q)=p\log \frac{p}{q}+ (1-p)\log\frac{1-p}{1-q} 
\]
is the  Kullback divergence between two scalar probabilities $p$ and $q$.
Hence if $\bar x_n$ deviates from $\rnp$, then 
$D( \bar x_n \| \rnp)$ gives the average exponential growth rate of 
the capital process.  We call $D( \bar x_n \| \rnp)$ the main growth
rate (or simply the growth rate) of the log capital.

In the subsequent sections we
often use the convexity of Kullback divergence for probability vectors
with respect to its
first argument.
Let ${\bm p}=\{p_j\}_{j=1}^k$ and ${\bm
  q}=\{q_j\}_{j=1}^k$ be probability vectors and let
\[
D({\bm p} \| {\bm q}) = \sum_{j=1}^k p_j \log \frac{p_j}{q_j}
\]
denote the Kullback divergence between ${\bm p}$ and ${\bm q}$.
Let ${\bm p}_i=\{p_{ij}\}_{j=1}^k$, $i=1,2$, be probability
vectors and for $0<\lambda<1$ let $\bar {\bm p}=\lambda {\bm p}_1 +
(1-\lambda) {\bm p}_2=
\{\bar p_j\}_{j=1}^k$.
Then the following relation is easily obtained
\begin{equation}
 \lambda D({\bm p}_1 \Vert {\bm q}) + (1-\lambda) D({\bm p}_2 \Vert {\bm q})
-  
D(\bar{\bm p}\Vert {\bm q})
= \lambda D({\bm p}_1 \Vert \bar{\bm p}) + 
  (1-\lambda) D({\bm p}_2 \Vert \bar{\bm p})  \ge 0 .
\label{eq:KL-ineq}
\end{equation}
The left-hand side can also be written as
\[
\lambda D({\bm p}_1 \Vert {\bm q}) + (1-\lambda) D({\bm p}_2 \Vert {\bm q})
- D(\bar{\bm p}\Vert {\bm q})
=\sum_{j=1}^k \bar p_j  D\Big(\frac{\lambda p_{1j}}{\bar p_j} \Big\Vert \lambda\Big) \ge 0. 
\]

\section{Priors for higher order patterns and multistep strategies}
\label{sec:patterns}
As discussed in Section \ref{sec:intro}
there may be some deviating patterns 
from independent Bernoulli trials 
in the Reality's moves  $x_1 x_2 \dots$, 
although the path $\xi=x_1 x_2 \dots$  satisfies SLLN
$\lim_{n \to \infty} \bar x_n = \rnp$.
The strategy considered in \cite{ktt1} is based only on
$s_n$ and it can not exploit these patterns.   Skeptic can increase
his capital by strategies exploiting patterns not reflected in $\bar x_n$.
In the following we investigate two types of such  
non-randomness or higher order patterns. The first is the block type 
pattern  and the second is the Markovian type pattern.  We give
multistep Bayesian strategies which effectively exploit these patterns.

\subsection{Block patterns}
\label{subsec:block}
For clarity of presentation, we first consider sequence of pairs
(i.e.\ blocks of length 2) and later generalize the results to blocks of
arbitrary length.

Consider the sequence of pairs $(x_1  x_2)(x_3  x_4)\ldots
(x_{2n-1}  x_{2n})$ of Reality's moves and denote the number 
of the pairs $(11), (1 0), (0 1),  (0 0)$ among the
first $n$ blocks by
$m_n^{11}$, $m_n^{10}$, $m_n^{01}$, $m_n^{00}$, respectively. If the
sequence is random, i.e., the Reality's moves are i.i.d.\
Bernoulli trials with success probability $\rnp$, then
we will have
\begin{align*}
\lim_{n\to \infty} \frac{m_n^{ij}}{n} = \rnp^{ij}, \quad i, j = 0, 1,
\end{align*}
where $\rnp^{11} = (\rnp)^2,\ 
\rnp^{10} = \rnp^{01} = \rnp(1 - \rnp),\ \rnp^{00} = (1 - \rnp)^2$. 
%
We construct a strategy for which $\limsup_n \cK_n = \infty$ whenever 
$\limsup_n |m_n^{ij}/n - \rnp^{ij}| > 0$ at least for one $(i, j)$.

For this purpose,  at the $(2n - 1)$-th round $(n = 1, 2, \dots)$  
Skeptic chooses
four amounts  $M_n^{11}, M_n^{10}, M_n^{01}, M_n^{00}$ and bet them on 
$(x_{2n-1}  x_{2n}) = (1  1),  (1 0), (0 1), (0 0)$, respectively. 
Then we have 
\begin{equation*}
\cK_{2n} = \cK_{2n-2} + \sum_{i,j\in\{0,1\}} M_n^{ij}(z_n^{ij} - \rnp^{ij}),
\end{equation*}
where
\[
z_n^{11} = x_{2n-1} \times  x_{2n} = 
\begin{cases} 1 , &  \textrm{if}\ (x_{2n-1} x_{2n}) = (1 1) \\
              0, & \textrm{otherwise}\ 
\end{cases}
\]
and other $z_n^{ij}$, 
$i, j = 0, 1$, are defined similarly. 
%
Thus we define a derived capital process $\cK_n^*=\cK_{2n}$
with the protocol
\[
\cK_n^* = \cK_{n-1}^* + \sum_{i,j\in \{0,1\}} M_n^{ij}(z_n^{ij} - \rnp^{ij}),
\quad n = 1, 2, \dots,  \qquad
(\cK_0^* = 1).
\]

As a natural generalization of the beta-binomial distribution 
treated in \cite{ktt1}, let us take the  
Dirichlet-multinomial distribution as $Q(\xi^{2n})=Q(z_1\dots z_n)$. 
Then the corresponding strategy 
is given by (cf.\ (\ref{eq:redundant-M}) and (\ref{eq:beta-binomial}))
\[M_n^{ij}= \cK_{n-1}^*
\frac{m_{n-1}^{ij} + c^{ij}}{\rnp^{ij}(n - 1 + c)},\quad i, j = 0, 1,
\]
where $c^{ij}$'s are positive hyperparameters of the Dirichlet prior
and $\sum_{i,j\in\{0.1\}} c^{ij} = c$.
The capital process $\cK_n^*
=\cK_{2n}^{\cP_Q}$
for this strategy is given by 
\begin{align*}
\cK_n^* &= \frac{Q(\xi^{2n})}{\prod_{i,j} (\rnp^{ij})^{m_n^{ij}}} 
= \frac{\prod_{i,j} \Gamma(m_n^{ij} + c^{ij})/\Gamma(c^{ij})}
{\big(\Gamma(n + c)/\Gamma(c)\big)\prod_{i,j} (\rnp^{ij})^{m_n^{ij}}}\\ 
&= \frac{\Gamma(c)\prod_{i,j} \Gamma(m_n^{ij} + c^{ij})}
{\prod_{i,j} \Gamma(c^{ij})\Gamma(n + c)
\prod_{i,j} (\rnp^{ij})^{m_n^{ij}}},
\end{align*}
where in the products $i,j$ range over $\{0,1\}$.

We evaluate the asymptotic behavior of this capital. 
Denote $m_n^{ij}/n = \hat p_n^{ij}$,\ $i, j = 0, 1$.
Then 
as in (\ref{eq:bayes1}) we have
\[
\log \cK_n^* 
= n\sum_{i,j\in\{0,1\}} \hat p_n^{ij} \log \frac{\hat p_n^{ij}}{\rnp^{ij}} - O(\log n).
\]
Hence for even $n$ the original capital process $\cK_n^{\cP_Q}=\cK_{n/2}^*$ is
written as
\[
\log \cK_n^{\cP_Q} = \log \cK_{n/2}^* 
= \frac{n}{2}D\big(\{\hat p_n^{ij}\} \bigm\| \{\rnp^{ij}\}\big) - O(\log n).
\]

Now if we neglect the pairwise block patterns and apply the strategy of
\cite{ktt1} based on $s_n$ only,  
then corresponding capital process  $\cK_n^0$ 
behaves as (\ref{eq:bayes1}).
We compare $\cK_n^{\cP_Q}$ and $\cK_n^0$.
By 
(\ref{eq:KL-ineq}) with $\lambda=1/2$ we have
\begin{equation}
\label{eq:block2-comparison}
\log \cK_n^{\cP_Q} - \log \cK_n^0
= \frac{n}{2}D\big((\hat p_n^{11}, \hat p_n^{10}, \hat p_n^{01}, \hat p_n^{00})
\bigm\| 
(\hat \rnp_n^{11}, \hat \rnp_n^{10}, \hat \rnp_n^{01}, \hat \rnp_n^{00})
\big)  - O(\log n),
\end{equation}
where $\hat\rnp_n^{11} = (\bar x_n)^2, 
\hat\rnp_n^{10} = \hat\rnp_n^{01} = \bar x_n(1 - \bar x_n), 
\hat\rnp_n^{00} = (1 - \bar x_n)^2$. 
We see that $\cK_n^{\cP_Q}$ exploits the pairwise block type 
non-randomness more effectively than $\cK_n^0$ by the amount half the 
Kullback divergence given in the right-hand side of 
(\ref{eq:block2-comparison}).

So far we have only considered even $n$.  This is sufficient
for analyzing the asymptotic behavior of $\cK_n^{\cP_Q}$.
For completeness we discuss $\cK_n^{\cP_Q}$ for odd $n$.
We can decompose Skeptic's bet $M_n^{ij}$
on  $z_n^{ij}=1$ to  
the $(2n-1)$-th round and the $2n$-th round as follows.
\begin{itemize}
\setlength{\itemsep}{0pt}
\item[1)]
 at the $(2n-1)$-th round, Skeptic bets 
$\rnp M_n^{11} + (1-\rnp) M_n^{10}$ on $x_{2n-1}=1$  and 
$\rnp M_n^{01} + (1-\rnp) M_n^{00}$ on $x_{2n-1}=0$, 
\item[2a)] if $x_{2n-1}=1$, then at the $2n$-th round he bets  
$M_n^{11}$ on $x_{2n}=1$ and
$M_n^{10}$ on $x_{2n}=0$,
\item[2b)] if $x_{2n-1}=0$, then at the $2n$-th round he bets 
$M_n^{01}$ on $x_{2n}=1$ and
$M_n^{00}$ on $x_{2n}=0$.
\end{itemize}
Denote $m_n^{i+}=\sum_{j=0}^1 m_n^{ij}$ and 
$c^{i+}=\sum_{j=0}^1 c^{ij}$, $i=0,1$.
Then  the capital at an odd round $\cK_{2n+1}^{\cP_Q}$ is written as follows.
\begin{align*}
\cK_{2n+1}^{\cP_Q} &= \cK_{2n}^{\cP_Q}\times
\begin{cases} (m_n^{1+} +c^{1+})/(\rnp (n-1+c)) , &  \textrm{if}\ x_{2n+1}=1 \\
              (m_n^{0+} +c^{0+})/((1-\rnp) (n-1+c)), & \textrm{if}\ x_{2n+1}=0
\end{cases} \\
&=\cK_{2n}^{\cP_Q}\times \frac{\Gamma(c)\Gamma(m_n^{1+}+
  c^{1+}+x_{2n+1}) \Gamma(m_n^{0+} +   c^{0+} +1-x_{2n+1})}
 {\rnp^{x_{2n+1}}(1-\rnp)^{1-x_{2n+1}} \Gamma(n+c)
\Gamma(m_n^{1+}+   c^{1+}) \Gamma(m_n^{0+} + c^{0+})
}.
\end{align*}

\medskip
We can construct a similar strategy for the sequence of pairs 
$(x_2 x_3)\dots (x_{2n} x_{2n+1})$  and can also combine these two 
strategies by splitting the initial capital into two equal parts and
applying the corresponding strategy for each of them. 
Let $\cK^{B}=\cK^{\cP_Q}$ denote the capital process considered so
far and let $\tilde \cK^{B}$ denote the similar capital process based on
$(x_2 x_3)(x_4 x_5)\dots$.  Then the capital process of the combined
strategy is written as $\cK=(1/2)(\cK^{B}+\tilde\cK^{B})$.
Hence 
$\limsup_n \cK_n = \infty$, if the 
relative frequency of consecutive pairs
$(x_{2n-1} x_{2n})$,  $(x_{2n} x_{2n+1})$,
taking the value $(ij)$
does not converge to 
$\{\rnp^{ij}/2\}$ for some $(ij)$.

\medskip
We have discussed blocks of length two for notational simplicity.
The above derivation can be extended to $n$ blocks of consecutive 
$k$-tuples, and hereafter we outline the procedure. 
Let $\bm{x}_1^k, \bm{x}_2^k, \dots, \bm{x}_n^k$ be the first $n$ blocks 
of $k$-tuple with
\[
\bm{x}_m^k = (x_{k(m-1)+1}, x_{k(m-1)+2}, \dots, x_{km}),\ m = 1, \dots, n,
\]  
and let $m_n^{\bm{\epsilon}_k}$ , 
$\bm{\epsilon}_k = \epsilon_1 \dots \epsilon_k$,  $\epsilon_i = 1$
or $0$, denote 
the number of the consecutive $k$-tuple $\bm{\epsilon}_k$
among the first $n$ blocks. 
Also denote
\[
\rnp^{\bm{\epsilon}_k} 
= \rnp^{\sum_{i=1}^k \epsilon_i}(1 - \rnp)^{k-\sum_{i=1}^k \epsilon_i} .
\] 
If the sequence is random, 
we will have
$
\lim_{n\to \infty} m_n^{\bm{\epsilon}_k}/n = \rnp^{\bm{\epsilon}_k} 
$
for all $\bm{\epsilon}_k\in \{0,1\}^k$.

At the $k(n - 1) + 1$-st round $(n = 1, 2,
\dots)$, 
Skeptic chooses $2^k$ amounts $M_n^{\bm{\epsilon}_k}$ and bet them on 
$\bm{x}_n^k=\bm{\epsilon}_k$. Then we have 
\[
\cK_{kn} = \cK_{k(n-1)} + \sum_{\bm{\epsilon}_k \in \{0,1\}^k} 
M_n^{\bm{\epsilon}_k}
(z_n^{\bm{\epsilon}_k} - \rnp^{\bm{\epsilon}_k}),
\]
where
\[
z_n^{\bm{\epsilon}_k} = 
\begin{cases} 1 , &  \textrm{if}\ \bm{x}_n^k = \bm{\epsilon}_k \\
              0, & \textrm{otherwise}.\ 
\end{cases}
\]
Thus the derived game 
$\cK_n^* =\cK_{kn}$ is defined with the protocol
\[
\cK_n^* = \cK_{n-1}^* + \sum_{\bm{\epsilon}_k\in \{0,1\}^k} 
M_n^{\bm{\epsilon}_k}(z_n^{\bm{\epsilon}_k} - \rnp^{\bm{\epsilon}_k}),
\ n = 1, 2, \dots,   \qquad (\cK_0^* = 1), 
\]
where $\sum_{\bm{\epsilon}_k\in \{0,1\}^k} z_n^{\bm{\epsilon}_k} = 1.$ 

If we take the Dirichlet-multinomial distribution as
$Q(\xi^{kn})=Q(z_1^{\bm{\epsilon}_k}\dots z_n^{\bm{\epsilon}_k})$, then the corresponding
strategy 
is given by
\[
M_n^{\bm{\epsilon}_k}=
{\cK_{n-1}^*} 
\frac{m_{n-1}^{\bm{\epsilon}_k} + c^{\bm{\epsilon}_k}}
{\rnp^{\bm{\epsilon}_k}(n - 1 + c)},
\]
where $\forall c^{\bm{\epsilon}_k} > 0,\ 
\sum_{\bm{\epsilon}_k\in \{0,1\}^k} c^{\bm{\epsilon}_k} = c$, are the
hyperparameters of the Dirichlet prior.
The capital process for this strategy is  
\[
\cK_{kn}^{\cP_Q}=
\cK_n^* = \frac{Q(\xi^{kn})}{\prod_{\bm{\epsilon}_k} 
(\rnp^{\bm{\epsilon}_k})^{m_n^{\bm{\epsilon}_k}}} 
= \frac{\Gamma(c)\prod_{\bm{\epsilon}_k} 
\Gamma(m_n^{\bm{\epsilon}_k} + c^{\bm{\epsilon}_k})}
{\prod_{\bm{\epsilon}_k} \Gamma(c^{\bm{\epsilon}_k})\Gamma(n + c)
\prod_{\bm{\epsilon}_k} (\rnp^{\bm{\epsilon}_k})^{m_n^{\bm{\epsilon}_k}}}.
\]
We evaluate the asymptotic behavior of this capital. 
Denote  $m_n^{\bm{\epsilon}_k}/n =\hat p_n^{\bm{\epsilon}_k}$. 
Then 
\begin{equation}
\label{eq:block-general}
\log \cK_n^* = n D\big(\{\hat p_n^{\bm{\epsilon}_k} \}  
\bigm\| \{\rnp^{\bm{\epsilon}_k}\}\big) - O(\log n)
= n\sum_{\bm{\epsilon}_k\in \{0,1\}^k} \hat p_n^{\bm{\epsilon}_k} 
\log \frac{\hat p_n^{\bm{\epsilon}_k}}{\rnp^{\bm{\epsilon}_k}} - O(\log n).
\end{equation}
Hence in the context of the original game, for $n$ which is a multiple
of $k$, we have
\begin{equation*}
\log \cK_n^{\cP_Q} = \log \cK_{n/k}^* 
= \frac{n}{k} D\big(\{\hat p_n^{\bm{\epsilon}_k} \} \bigm\| 
\{\rnp^{\bm{\epsilon}_k} \} \big) - O(\log n).
\end{equation*}

Similar strategies can be constructed for the $n$ blocks of the $k$-tuples 
with shift $a$ 
\begin{equation*}
\bm{x}_{m+a}^k = (x_{k(m-1)+1+a}, x_{k(m-1)+2+a}, \dots, x_{km+a}),\quad
1\le m\le n,\ a = 1,\dots, k-1.
\end{equation*}
We next combine $k$ kinds of these strategies by splitting the initial
capital into $k$ equal parts and applying the corresponding strategy
for each of them. 
Let $\cK_n$ denote the resulting capital process.
Then $\limsup_n \cK_n = \infty$, 
if for any  $\bm{\epsilon}_k$
the relative frequency of any of 
$\bm{x}_{m}^k, \bm{x}_{m+1}^k,\dots,$ $\bm{x}_{m+k-1}^k$ \ 
($1\le m\le n$) taking  the value 
$\bm{\epsilon}_k$ does not converge to $\rnp^{\bm{\epsilon}_k}/k$.

It is of interest to compare the growth rates of $\cK_n^{\cP_Q}$ for
different block lengths $k$.  
Note that block pattern strategy of length $k$ also depends on the
shift $a$.  If the empirical distribution of $\{\bm{\epsilon}_k\}$ 
is different for $a$, then the main growth rate of a block
strategy depends also on $a$.  For simplicity we mainly consider the case
that the empirical distribution $F_{n,a}$ 
of 
$\{\bm{x}_{m+a}^k\}_{m=1}^n$ 
is the same 
for different shifts $a$,  in the sense that
the total variation distance between 
$F_{n,a}$ and $F_{n,a'}$  converges to 0 for
all $a\neq a'$.
We call this case ``homogeneous with respect to the shifts''.
We show in Section \ref{sec:comparison-of-rate}
that the main growth rate of (\ref{eq:block-general})
is non-decreasing in $k$ under the assumption 
of homogeneity with respect to the shifts.

\subsection{Markovian patterns}
\label{subsec:Markov}
We can construct another strategy which exploits non-randomness 
in the moves of Reality.  
Such a  procedure can be given  as
\[
M_1 = 0,\quad
M_n = 
\begin{cases} M_n^+ , &  \textrm{if}\ x_{n-1} = 1 \\
              M_n^-, & \textrm{if}\ x_{n-1} = 0
\end{cases}
\quad n = 2, 3, \dots,
\]
where $M_n^+$ and $M_n^-$ can have different values.  This is a
first-order Markovian strategy, which incorporates the information on 
the last move $x_{n-1}$ of Reality.  As in the previous subsection,
for clarity of presentation we first consider the first-order Markovian
strategy and then extend it to higher-order Markovian strategies.

Let $q_n^1=s_n$ and $q_n^0=n-s_n$.
We also denote the numbers of pairs $(x_{i-1} x_i) = (1
1),(10), (01), (00),\ i = 2, \dots, n$ by $q_n^{11},\  q_n^{10},\
q_n^{01},\ q_n^{00}$, respectively.
(For $n=1$,  let $0=q_1^{11}=q_1^{10}=q_1^{01}=q_1^{00}$.) 
We take the beta-binomial distribution with parameters $a,b>0$
for  $Q(x_i \mid x_{i-1} = 1)$  and
$Q(x_i \mid  x_{i-1} = 0)$, $i\ge 2$.  
(The initial distribution is taken as $Q(1)=\rnp=1-Q(0)$.)
Then the corresponding
strategy for $i\ge 2$ is
given by
\begin{align*}
M_i^+ = \cK_{i-1} \frac{p_i^+ - \rnp}{\rnp(1 - \rnp)},\quad 
M_i^- = \cK_{i-1} \frac{p_i^- - \rnp}{\rnp(1 - \rnp)},
\end{align*}
with
\begin{align*}
p_i^+ = \frac{q_{i-1}^{11} + a}{q_{i-1}^{11} + q_{i-1}^{10} + a + b},\quad 
p_i^- = \frac{q_{i-1}^{01} + a}{q_{i-1}^{01} + q_{i-1}^{00} + a + b}.
\end{align*}
For $i=1$, we let $M_1^+=M_1^-=0$.
The capital $\cK_n=\cK_n^{\cP_Q}$  for this strategy is given by
\begin{align*}
\cK_n &= \frac{\Gamma(a + b)\Gamma(q_n^{11} + a)
\Gamma(q_n^{10} + b)}
{\Gamma(a)\Gamma(b)\Gamma(q_n^{11} + q_n^{10} + a + b)
\rnp^{q_n^{11}}(1 - \rnp)^{q_n^{10}}}\times
\frac{\Gamma(a + b)\Gamma(q_n^{01} + a)
\Gamma(q_n^{00} + b)}
{\Gamma(a)\Gamma(b)\Gamma(q_n^{01} + q_n^{00} + a + b)
\rnp^{q_n^{01}}(1 - \rnp)^{q_n^{00}}}\\
&= \frac{\Gamma(a + b)^2\Gamma(q_n^{11} + a)
\Gamma(q_n^{10} + b)\Gamma(q_n^{01} + a)\Gamma(q_n^{00} + b)}
{\Gamma(a)^2\Gamma(b)^2\Gamma(q_n^{11} + q_n^{10} + a + b)
\Gamma(q_n^{01} + q_n^{00} + a + b)
\rnp^{q_n^{11}+q_n^{01}}(1 - \rnp)^{q_n^{10}+q_n^{00}}}.
\end{align*}

We evaluate the asymptotic behavior of this capital process. 
Write $q_n^1/n = \hat p_n = \bar x_n$, $q_n^{11}/q_n^1 =r_n^1$, and
$q_n^{01}/q_n^0 =r_n^0$.
Then as in (\ref{eq:bayes1})
we have
\begin{equation}
\label{eq:markov2}
\log \cK_n = n \hat p_n D(r_n^1\| \rnp) + 
n (1 - \hat p_n)D(r_n^0 \| \rnp )
- O(\log n).
\end{equation}
Hence if either $\limsup_n |r_n^1- \rnp| > 0$ or 
$\limsup_n |r_n^0- \rnp| > 0$ 
then 
$\limsup_n \cK_n =\infty$.

Now we compare (\ref{eq:markov2}) to the capital process 
(\ref{eq:bayes1}) of the strategy based on $s_n$ only.
By counting the number of pairs we have
$q_n^{11} + q_n^{01} = q_n^1 - x_1$ and
\[
\hat p_n r_n^1 + 
(1-\hat p_n) r_n^0= \frac{q_n^{11}}{n} + \frac{q_n^{01}}{n}=\hat p_n 
  - \frac{x_1}{n} = \hat p_n - O(1/n).
\]
Then by (\ref{eq:KL-ineq}) with $\lambda=\hat p_n$ we have
\begin{align}
\log \cK_n - \log \cK_n^0
&=  n  \Big[
\hat p_n D(r_n^1 \| \rnp) +  
(1-\hat p_n) D(r_n^0 \| \rnp)   
 -  D(\hat p_n \| \rnp)\Big] - O(\log n)  \nonumber \\
&=  n 
\Big[
\hat p_n D(r_n^1 \| \hat p_n) +  
(1 - \hat p_n)D(r_n^0 \| \hat p_n )\Big]
- O(\log n).
\label{eq:markov2-comparison}
\end{align}
We again see that $\log \cK_n$ exploits the first-order Markovian  
non-randomness more effectively than $\log \cK_n^0$ by the amount given above.

The above first-order procedure can be extended to the $k$-th order
procedure based on sequence $\tilde{\bm{x}}_{n-1}^k = x_{n-k} \dots x_{n-1}$ 
of length $k$ preceding $x_n$.
We hereafter outline 
the procedure. 

Let $q_n^{\bm{\epsilon}_k}$ denote 
the number of consecutive $k$-tuple $\bm{\epsilon}_k$ 
in $\xi^n=x_1\dots x_n$. 
For $(k + 1)$-tuples
$\bm{\epsilon}_k 1 = \epsilon_1\dots \epsilon_k 1$ 
and 
$\bm{\epsilon}_k 0 = 
\epsilon_1,\dots, \epsilon_k 0$
we similarly define
$q_n^{\bm{\epsilon}_k1}$ and 
$q_n^{\bm{\epsilon}_k0}$.
We take the beta-binomial distribution for 
$Q(x_i | \tilde{\bm{x}}_{i-1}^k = \bm{\epsilon}_k)$, $i\ge k+1$. 
(The initial distribution $Q(\xi^k)$ up to round $k$ 
is taken as the risk neutral measure.)
The corresponding strategy is
\begin{align*}
M_i = 0,\quad 1 \le i \le k,\quad
M_i = M_i^{\bm{\epsilon}_k} = 
\cK_{i-1}
\frac{p_i^{\bm{\epsilon}_k} - \rnp}{\rnp(1 - \rnp)},\quad 
\textrm{if}\ \tilde{\bm{x}}_{i-1}^k = \bm{\epsilon}_k,
\quad i \ge k + 1,
\end{align*}
with
\begin{align*}
p_i^{\bm{\epsilon}_k} = 
\frac{q_{i-1}^{\bm{\epsilon}_k1} + a}{q_{i-1}^{\bm{\epsilon}_k1} + q_{i-1}^{\bm{\epsilon}_k0} + a + b},
\quad a, b > 0.
\end{align*}
The capital process $\cK_n=\cK_n^{\cP_Q}$ for this strategy is written as 
\begin{align*}
\cK_n &= \prod_{\bm{\epsilon}_k \in \{0,1\}^k}
\frac{\Gamma(a + b)\Gamma(q_n^{\bm{\epsilon}_k1} + a)
\Gamma(q_n^{\bm{\epsilon}_k0} + b)}
{\Gamma(a)\Gamma(b)\Gamma(q_n^{\bm{\epsilon}_k} + a + b)
\rnp^{q_n^{\bm{\epsilon}_k1}}(1 - \rnp)^{q_n^{\bm{\epsilon}_k0}}}\\
&= \frac{\Gamma(a + b)^{2^k}}{\Gamma(a)^{2^k}\Gamma(b)^{2^k}}
\prod_{\bm{\epsilon}_k\in \{0,1\}^k}
\frac{\Gamma(q_n^{\bm{\epsilon}_k1} + a)
\Gamma(q_n^{\bm{\epsilon}_k0} + b)}
{\Gamma(q_n^{\bm{\epsilon}_k} + a + b)
\rnp^{q_n^{\bm{\epsilon}_k1}}(1 - \rnp)^{q_n^{\bm{\epsilon}_k0}}}.
\end{align*}
We evaluate the asymptotic behavior of this capital. 
Denote $\hat p_n^{M,\bm{\epsilon}_k}=q_n^{\bm{\epsilon}_k}/n$, 
$r_n^{\bm{\epsilon}_k}= q_n^{\bm{\epsilon}_k1}/q_n^{\bm{\epsilon}_k}$.
Note that $\hat p_n^{M,\bm{\epsilon}_k}$ differs from 
$\hat p_n^{\bm{\epsilon}_k}$ in 
(\ref{eq:block-general}), because the latter only looks at the 
relative frequency of $\bm{\epsilon}_k$ among non-overlapping blocks of length $k$.
As $n \to \infty$, we have
\begin{equation}
\label{eq:Markov-general}
\log \cK_n %
= n \sum_{\bm{\epsilon}_k\in \{0,1\}^k} \hat p_n^{M,\bm{\epsilon}_k}
D\big( r_n^{\bm{\epsilon}_k}\|  \rnp\big) - O(\log n).
\end{equation}
Hence if $\limsup_n | r^{\bm{\epsilon}_k}-\rnp| > 0$ for  any $\bm{\epsilon}_k$,  
then 
$\limsup_n \cK_n = \infty$.
In the next section we show that the first term on 
the right-hand side of
(\ref{eq:Markov-general}) is non-decreasing in $k$.

\subsection{Relations between block strategies and Markovian strategies}
\label{sec:comparison-of-rate}

In (\ref{eq:block2-comparison}) and (\ref{eq:markov2-comparison}) we
saw that the block strategy of length two and the first-order
Markovian strategy have better main growth rates than the strategy
based $s_n$ only. In this section we give results on the comparison of
main growth rates for general $k$, which is the block size for block
strategies and the order for Markovian strategies.  

For Markovian strategies we show that a larger $k$ gives a better growth
rate.  Concerning block strategies we show that the same result holds  under 
the assumption of homogeneity with respect to shifts and furthermore
that Markovian strategy of order $k-1$ gives a better growth rate than
the block strategy of length $k$.

We first consider Markovian strategies.
Let $\cK_{n}^{M,k}$ denote the capital process of $k$-th order
Markovian strategy.
In (\ref{eq:Markov-general}), for a given
$\bm{\epsilon}_{k-1}=\epsilon_2 \dots  \epsilon_k$ consider
the sum of two terms involving 
$1\bm{\epsilon}_{k-1}=1 \epsilon_2 \dots \epsilon_k$
and 
$0\bm{\epsilon}_{k-1}=0 \epsilon_2 \dots \epsilon_k$.
We have
\[
\hat p_n^{M,1\bm{\epsilon}_{k-1}} + \hat p_n^{M,0\bm{\epsilon}_{k-1}}
=\hat p_n^{M,\bm{\epsilon}_{k-1}} + O(1/n),
\]
where $O(1/n)$ is due to the counting problem at the
end of the sequence $\xi^n$.  Also
\[
\hat p_n^{M,1\bm{\epsilon}_{k-1}} r_n^{1\bm{\epsilon}_{k-1}}
+ 
\hat p_n^{M,0\bm{\epsilon}_{k-1}} r_n^{0\bm{\epsilon}_{k-1}}
=
\hat p_n^{M,\bm{\epsilon}_{k-1}} r_n^{\bm{\epsilon}_{k-1}} +
O(1/n).
\]
Then by (\ref{eq:KL-ineq}) 
\begin{align}
&\hat p_n^{M,1\bm{\epsilon}_{k-1}}
D(r_n^{1\bm{\epsilon}_{k-1}} \| \rnp) + 
\hat p_n^{M,0\bm{\epsilon}_{k-1}}
D(r_n^{0\bm{\epsilon}_{k-1}} \| \rnp) 
- \hat p_n^{M,\bm{\epsilon}_{k-1}}
D(r_n^{1\bm{\epsilon}_{k-1}} \| \rnp)  \nonumber \\
& \qquad\qquad =
\hat p_n^{M,1\bm{\epsilon}_{k-1}}
D(r_n^{1\bm{\epsilon}_{k-1}} \| r_n^{\bm{\epsilon}_{k-1}})+
\hat p_n^{M,0\bm{\epsilon}_{k-1}} 
  D(r_n^{0\bm{\epsilon}_{k-1}} \| r_n^{\bm{\epsilon}_{k-1}}) +
  O(1/n). 
\end{align}
Summing up over ${\bm{\epsilon}_{k-1}}\in \{0,1\}^{k-1}$ we have
\begin{align}
& \log \cK_{n}^{M,k} - \log \cK_{n}^{M,k-1}  \nonumber\\
& \qquad
=
n \sum_{\bm{\epsilon}_{k-1}\in \{0,1\}^{k-1}} \big[\hat p_n^{M,1\bm{\epsilon}_{k-1}} 
  D(r_n^{1\bm{\epsilon}_{k-1}} \| r_n^{\bm{\epsilon}_{k-1}}) 
 + \hat p_n^{M,0\bm{\epsilon}_{k-1}} 
  D(r_n^{0\bm{\epsilon}_{k-1}} \| r_n^{\bm{\epsilon}_{k-1}}) 
\big]
- O(\log n).
\label{eq:Markov-better-k}
\end{align}
Therefore the growth rate of the  Markovian strategy of order $k$ is
larger than that of order $k-1$ by the amount
shown on the right-hand side.

Next we consider block strategies.  
For $j=0,1$, define
\[
\hat p_n^{j\mid \bm{\epsilon}_{k-1}}
=\frac{\hat p_n^{M,\bm{\epsilon}_{k-1}j}}{\hat p_n^{M,\bm{\epsilon}_{k-1}}},
 \qquad
\rnp^{j\mid {\epsilon}_{k-1}}=
\begin{cases} \rnp, & \textrm{if}\ j=1\\
             1-\rnp, & \textrm{if}\ j=0.
\end{cases}
\]
Then 
\[
\sum_{\bm{\epsilon}_k\in \{0,1\}^k} \hat p_n^{M,\bm{\epsilon}_k} 
\log \frac{\hat p_n^{M,\bm{\epsilon}_k}}{\rnp^{\bm{\epsilon}_k}}
=\sum_{\bm{\epsilon}_k\in \{0,1\}^{k-1}} \hat p_n^{M,\bm{\epsilon}_{k-1}}
  \sum_{j=0}^1 \hat p_n^{j\mid {\bm{\epsilon}_{k-1}}}
   \log \frac{ \hat p_n^{j\mid {\bm{\epsilon}_{k-1}}}}
    {\rnp^{j\mid {\bm{\epsilon}_{k-1}}}}
+ \sum_{\bm{\epsilon}_k\in \{0,1\}^{k-1}}  \hat p_n^{M,\bm{\epsilon}_{k-1}}
\log \frac{\hat p_n^{M,\bm{\epsilon}_{k-1}}}{\rnp^{\bm{\epsilon}_{k-1}}} .
\]
Now 
$\hat p_n^{1\mid \bm{\epsilon}_{k-1}}=r_n^{\bm{\epsilon}_{k-1}}$ and 
under the the assumption of homogeneity with respect to shifts, the relative frequency
$\hat p_n^{\bm{\epsilon}_k}$ of $\bm{\epsilon}_k$ 
is the same for different shifts $a$ and this implies that
$\hat p_n^{\bm{\epsilon}_{k-1}} = \hat p_n^{M,\bm{\epsilon}_{k-1}} + o(1)$.
Therefore
\[
D\big(\{\hat p_n^{\bm{\epsilon}_k} \}  
\bigm\| \{\rnp^{\bm{\epsilon}_k}\}\big)=
\sum_{\bm{\epsilon}_k\in \{0,1\}^k} \hat p_n^{M,\bm{\epsilon}_{k-1}}
D\big(r_n^{\bm{\epsilon}_{k-1}}\|  \rnp\big) + 
D\big(\{\hat p_n^{\bm{\epsilon}_{k-1}} \}  
\bigm\| \{\rnp^{\bm{\epsilon}_{k-1}}\}\big) + o(1).
\]
By induction on $k$ we obtain
\[
k  \log \cK_n^{B,k}
=   \log \cK_n^{M,k-1} +  (k-1)\log \cK_n^{B,k-1} + o(n)
=\sum_{i=0}^{k-1} \log \cK_n^{M,i}
+ o(n),
\]
or
\begin{equation}
\label{eq:block-markov-average}
\log \cK_n^{B,k} = \frac{1}{k}\sum_{i=0}^{k-1} \log \cK_n^{M,i}
+ o(n),
\end{equation}
where $\cK_n^{M,0}=\cK_n^0$ is the capital process of the strategy 
based on $s_n$ only.  
In (\ref{eq:Markov-better-k})
we saw that
the growth rate of Markovian strategy is non-decreasing in $k$.
It follows that under the assumption of homogeneity
with respect to shifts the growth rate of the block strategy is also non-decreasing in $k$
and furthermore the growth rate of the Markovian strategy of order
$k-1$ is better than that of the block strategy of length $k$.

We should note that if the homogeneity does not hold, then the growth
rate of the block strategy of length $k$ might be better than that of
the Markovian strategy of order $k-1$.  If we divide the initial
capital into $k$ equal parts corresponding to each shift, 
then the combined capital process is the
arithmetic average of the capital process for different shifts.  
The growth rate of the combined capital equals the maximum of 
capital processes for different shifts and the maximum might be better than that of
the Markovian strategy or order $k-1$.  The question of homogeneity comes up again in
consideration of the asset trading game in continuous time.

\subsection{Universal Bayesian Skeptic by mixture of priors}
\label{subsec:universal}
When we incorporate the strategies developed in the previous 
subsections, we get a strategy which can exploit any block or 
Markovian patterns of any length deviating from independent Bernoulli trials.

Let $\cP^{B,k}$ be the Bayesian strategy which exploits the $k$-th
order block patterns constructed in Section \ref{subsec:block}, and
let $\cP^{M,k}$ be the Bayesian strategy which exploits the $k$-th
order Markovian patterns constructed in Section \ref{subsec:Markov}.
At first we divide the initial capital 
$\cK_0 = 1$ into equal two parts $\cK_{0B} = 1/2$ and $\cK_{0M} = 1/2$, 
and further divide $\cK_{0B} = 1/2$ into countably infinite accounts 
with positive initial capitals $c_{B1}, c_{B2}, \dots,\ \sum_{k=1}^\infty c_{Bk} = 1/2$, and also divide 
$\cK_{0M} = 1/2$ similarly as
$c_{M1}, c_{M2}, \dots,\ \sum_{k=1}^\infty c_{Mk} = 1/2$. 
We apply the strategy $\cP^{B,k}$ to the $k$-th account $c_{Bk}$  
and apply the strategy $\cP^{M,k}$ to the $k$-th account $c_{Mk}$,
respectively. The resulting ``universal'' strategy 
\begin{equation*}
\cP^* = \cP^*_B + \cP^*_M = 
\sum_{k=1}^\infty c_{Bk}\cP^{B,k} + \sum_{k=1}^\infty c_{Mk}\cP^{M,k}
\end{equation*}
can exploit any block or Markovian pattern of any length. 

We shall formulate and state this fact within the framework of measure-theoretic
probability in order to clarify  the connection to
the universal source coding in information theory.  
For simplicity of statement we consider
the coin-tossing game with $\rnp=1/2$ and use the base two logarithm.
Let $\{p^{\bm{\epsilon_k}}\}$ denote the $k$-dimensional
probability distribution of a random vector $(X_1, \dots, X_k) \in
\{0,1\}^k$. Then
\[\frac{1}{\log 2}D(\{p^{\bm{\epsilon_k}}\} \|
\{\rnp^{\bm{\epsilon_k}}\})
=
\sum_{\bm{\epsilon_k} \in \{0,1\}^k} p^{\bm{\epsilon_k}}
\log_2 \frac{p^{\bm{\epsilon_k}}}{2^{-k}}\\
=k  - H(X_1,\dots, X_k),
\]
where 
$H(X_1,\dots, X_k)$
denotes the entropy of 
$\{p^{\bm{\epsilon_k}}\}$.  For an infinite sequence $X_1,X_2, \dots$ 
of stationary and ergodic 0-1 random variables, the entropy 
$H({\cal X})=H(X_1,X_2, \dots)$ is defined as 
$H({\cal X}) = \lim_k (1/k) H(X_1, \dots, H_k)$.  Under the
assumption of stationarity and ergodicity, each 
$k$-dimensional empirical distribution converges to 
the probability distribution $\{p^{\bm{\epsilon_k}}\}$
almost surely.  
Furthermore  in the previous
subsection we saw that larger block sizes achieve better growth rates.
Therefore, arguing as in Chapter 13 of \cite{cover-thomas}
we have the following proposition.

\begin{proposition} 
\label{prop:1}
If $X_1, X_2, \dots$, are stationary and ergodic sequence of 
0-1 random variables, then
\[
\frac{1}{n}\log_2 \cK_n^{\cP^*} \rightarrow 1 - H({\cal X}), \
\ a.s.\ \ 
\qquad
(n\rightarrow \infty)  .
\]
\end{proposition}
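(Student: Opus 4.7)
The plan is to bracket $\tfrac{1}{n}\log_2\cK_n^{\cP^*}$ between matching $\liminf$ and $\limsup$ bounds, both equal to $1-H(\mathcal{X})$. For the lower bound I would route through the Markovian components of $\cP^*$. The mixture structure gives $\cK_n^{\cP^*}\ge c_{Mk}\,\cK_n^{M,k}$ for each fixed $k$, so it suffices to control $\liminf_n\tfrac{1}{n}\log_2\cK_n^{M,k}$ and then let $k\to\infty$. Specializing (\ref{eq:Markov-general}) to $\rnp=1/2$ in base-$2$ logarithms, the identity $\tfrac{1}{\log 2}D(r\|\tfrac12)=1-H_2(r)$ (binary entropy) converts it to
\[
\frac{1}{n}\log_2\cK_n^{M,k} \;=\; 1-\sum_{\bm{\epsilon}_k\in\{0,1\}^k}\hat p_n^{M,\bm{\epsilon}_k}\,H_2\!\bigl(r_n^{\bm{\epsilon}_k}\bigr)+O\!\bigl(\tfrac{\log n}{n}\bigr).
\]
Both $\hat p_n^{M,\bm{\epsilon}_k}$ and $r_n^{\bm{\epsilon}_k}$ are sliding-window Ces\`aro averages under the shift, so Birkhoff's theorem applied to the ergodic source $\{X_i\}$ gives $\hat p_n^{M,\bm{\epsilon}_k}\to p^{\bm{\epsilon}_k}$ and $r_n^{\bm{\epsilon}_k}\to p(1\mid \bm{\epsilon}_k)$ almost surely; continuity of $H_2$ then yields
\[
\liminf_n\frac{1}{n}\log_2\cK_n^{\cP^*}\;\ge\;1-H(X_{k+1}\mid X_1,\ldots,X_k)\quad\text{a.s.},
\]
and since $H(X_{k+1}\mid X_1,\ldots,X_k)\downarrow H(\mathcal{X})$ for any stationary source, sending $k\to\infty$ produces $1-H(\mathcal{X})$.

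For the upper bound I would invoke the bijection (stated in the Appendix) between prudent strategies and probability distributions on $\Xi$. As a convex mixture of prudent strategies with weights summing to $1$, $\cP^*$ is itself prudent and corresponds to the matching mixture $Q^*$ of the component distributions, so by (\ref{eq:capital-Q}) we have $\cK_n^{\cP^*}=Q^*(\xi^n)/2^{-n}$. Writing $P$ for the true law of $\{X_i\}$, the identity $\EE_P[Q^*(\xi^n)/P(\xi^n)]\le 1$ together with Markov's inequality and Borel--Cantelli (the standard Barron-type argument) gives $\limsup_n\tfrac{1}{n}\log_2[Q^*(\xi^n)/P(\xi^n)]\le 0$ almost surely, and the Shannon--McMillan--Breiman theorem supplies $-\tfrac{1}{n}\log_2P(\xi^n)\to H(\mathcal{X})$ almost surely. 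Combining,
\[
\limsup_n\frac{1}{n}\log_2\cK_n^{\cP^*}\;=\;1+\limsup_n\frac{1}{n}\log_2 Q^*(\xi^n)\;\le\;1-H(\mathcal{X})\quad\text{a.s.}
\]

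The principal technical choice is preferring the Markovian components over the block components for the lower bound: for the block strategy of length $k$ the induced $k$-block process is stationary under $T^k$ but typically not ergodic when $\{X_i\}$ is only ergodic under $T$, so one would have to average over the $k$ shifts or invoke a conditional Birkhoff theorem to identify the limiting empirical distribution. Sliding-window statistics, which are exactly what appear in the Markovian formula, sidestep this issue. Aside from that bookkeeping, the remainder terms $O(\log n/n)$ are harmless because $k$ is fixed before $n\to\infty$, and the argument is otherwise the universal-coding template of \cite{cover-thomas}, Chapter 13, transplanted to the game-theoretic setting.
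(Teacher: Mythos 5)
Your argument is correct and follows essentially the same route as the paper, whose own ``proof'' is precisely the sketch you flesh out: the identity $D(\{p^{\bm{\epsilon}_k}\}\,\|\,\{\rho^{\bm{\epsilon}_k}\})/\log 2 = k - H(X_1,\dots,X_k)$, almost-sure convergence of the empirical $k$-block distributions under stationarity and ergodicity, monotonicity of the growth rate in $k$, and a delegation of the remaining details to Chapter 13 of Cover and Thomas. Your completion supplies exactly the steps the paper leaves implicit --- the upper bound via the prudent-strategy/distribution bijection combined with the Barron--Markov--Borel--Cantelli bound and the Shannon--McMillan--Breiman theorem, and the choice of sliding-window (Markovian) rather than non-overlapping block statistics so that Birkhoff's theorem under $T$ suffices without worrying about ergodicity under $T^k$ --- and both refinements are sound.
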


Thus we can say that  $\cK_n^{\cP^*} \to \infty$
achieves the optimal rate in the sense of universal source coding in
information theory.

\section{Application to asset trading games in continuous time}
\label{sec:preliminary asset}

In this section we apply the results on 
the block strategy of length two  and the first-order Markovian
strategy of the previous section to 
an asset trading game in continuous time considered 
in \cite{tkt1}. It should be noted that 
our interest here is to derive explicit growth rates of our strategies
applied to the asset trading game, rather than a rigorous treatment of 
forcing of the variation exponent of two.    Therefore in our derivation we proceed with
informal definitions and convenient regularity conditions.

In Section \ref{subsec:informal} we summarize the setup of 
an asset trading game.
In Section \ref{subsec:simple} we obtain growth rates of our
strategies for the asset trading game when the 
variation exponent of the asset price path deviates from two.

\subsection{Preliminaries on asset trading games in continuous time}
\label{subsec:informal}

Here we summarize preliminary facts on the asset trading game
as formulated in \cite{tkt1}.  Our framework in \cite{tkt1} is now much
generalized in the recent papers of Vovk
(\cite{vovk-cont-1},\cite{vovk-cont-2},\cite{vovk-cont-3}).

Suppose that there is a financial asset which is traded in
continuous time.  Let $S(t)$ denote the price of the unit amount of
the asset at time $t$.  We assume that $S(t)$ is positive and a
continuous function of $t$.  The price path $S(\cdot)$ is
chosen by a player ``Market'', which is the same as Reality in the
coin-tossing game.  ``Investor'' enters the market at time
$t=t_0=0$ with the initial capital of ${\cal K}(0)=1$.  
He decides 
discrete time points $0=t_0 <  t_1 < t_2 < \cdots$ to  trade  the
financial asset. The
trading time $t_i$ and the amount 
$M_i$ of the asset Investor holds
for the interval $[t_i,t_{i+1})$ can depend 
on the path of $S(t)$ up to time $t_i$.  

The basic fact on the behavior of $S(t)$ is the ``$\sqrt{{\rm
    d}t}$-effect'' (\cite{wp5}, \cite{tkt1}), 
which asserts that infinitesimal increments
$|{\rm d}S(t)|$ of the price path have to be of the
order $O(\sqrt{{\rm d}t})$,  in the sense that otherwise Investor can
make arbitrarily large profit without risking bankruptcy.  
When $|{\rm d}S(t)|=O(({\rm d}t)^H)$, then $H \in (0,1]$ is called the
H\"older exponent or the Hurst index of $S(t)$, and $1/H$ is called the
variation exponent.
Thus the game-theoretic statement of  $\sqrt{{\rm     d}t}$-effect is that
Investor can force the variation exponent of two.

We  consider ``limit order'' 
strategy of Investor.  Let $\delta>0$ be a constant.
Investor determines the trading times  $t_1, t_2, \dots$ as follows.
After $t_i$ is determined,
let $t_{i+1}$ be the first time after $t_i$ when either
\begin{align}
\label{eq:2-3}
\frac{S(t_{i+1})}{S(t_i)} = 1 + \delta\quad \textrm{or}
\quad = \frac{1}{1 + \delta}
\end{align}
happens. This procedure leads to a discrete time coin-tossing game
embedded into the asset trading game as follows.  Let
\[
x_n = \frac{(1+\delta) S(t_{n+1}) - S(t_n)}
{\delta (2+ \delta) S(t_n)}
= \begin{cases}
 1, & \text{if}\  S(t_{n+1})=S(t_n) (1+\delta)\\
 0, & \text{if}\  S(t_{n+1})=S(t_n)/(1+\delta),
\end{cases}
\]
and \[
\rnp = \rnp_\delta = \frac{1}{2+ \delta}.
\]  
Also write
$
\tilde{{\cal K}}_n = {\cal K}(t_{n+1})$.
Clearly  $x_n$ can be thought as ``heads'' or ``tails'' chosen by Market
after the time $t_i$.  More formally we define 
the following protocol of an embedded 
discrete time coin-tossing game.

\medskip\noindent
\textsc{Embedded Discrete Time Coin-Tossing Game} \\
\textbf{Protocol:}

\parshape=6
\IndentI   \WidthI
\IndentI   \WidthI
\IndentII  \WidthII
\IndentII  \WidthII
\IndentII  \WidthII
\IndentI   \WidthI
\noindent
$\tilde{{\cal K}}_0 :=1$.\\
FOR  $n=1, 2, \dots$:\\
  Investor announces $\nu_n\in{\mathbb R}$.\\
  Market announces $x_n\in \{0, 1\}$.\\
  $\tilde{{\cal K}}_n = \tilde{{\cal K}}_{n-1}(1 + \nu_n
  (x_n-\rnp))$.\\
END FOR\\

This embedded discrete time game allows us to apply results on
coin-tossing games to the asset trading game in continuous time.  
The amount $M_i$ of the asset held during $[t_i, t_{i+1})$ is determined by our
strategies in Section \ref{sec:patterns}. 
{}From now on we  fix the time interval $[0,T]$ of the asset trading
game.   Then 
the total number rounds  $n=n(\delta)$
played in the embedded coin-tossing game is finite.
For a given path, $n(\delta)$ is increased by 
letting  $\delta \downarrow 0$ in (\ref{eq:2-3}).  
$n(\delta)$ diverges to infinity as $\delta\downarrow 0$, unless $S(t)$ is
constant on $[0,T]$. 
We call a strategy with small $\delta$ a high-frequency strategy.  
In \cite{tkt1} we
applied high-frequency Bayesian strategy of \cite{ktt1} to the embedded discrete
time game and proved that 
Investor can force the variation exponent of
two, within an arbitrary small constant.

However the Bayesian strategy of \cite{ktt1} is based only on $s_n$
and does not take the higher order patterns of the increments of
$S(t)$ into account.  In the previous section we saw that multistep
Bayesian strategies can effectively exploit higher order patterns when
Reality's moves are not random.
Therefore it is of interest to investigate how fast Investor can increase his
capital by a high-frequency multistep Bayesian strategy, if the
variation exponent of $S(t)$ deviates from two.

\subsection{Growth rates of block strategy of length two and first-order Markov strategy
for asset trading game}
\label{subsec:simple}

In this section we derive growth rates of high-frequency
block strategy of length two and first-order Markov strategy
in the embedded coin-tossing game.
Our results are stated in two propositions at the end of this section.

Write $\eta=\log(1+\delta)$.  Then $\eta\downarrow 0$ is equivalent to
$\delta\downarrow 0$. We decrease $\eta$  to zero as
\[
\eta_k =  2^{-k},\ k = 1, 2, \dots.
\]
The advantage of taking this sequence of $\eta_k$ is that the
equi-spaced grids for $\log S(t)$  are completely nested in $k$ and we can establish
some important relations between the empirical distributions of block
patterns for different $k$ (see Lemma \ref{lem:combinatorial}  below).
We call the embedded coin-tossing game with $\eta_k$ the $k$-th
embedded coin-tossing game.
Let $n(\eta_k)$ denote the total number of rounds of 
the $k$-th embedded coin-tossing game. 
For notational convenience we sometimes write $n$ or $n_k$  
instead of $n(\eta_k)$.
As in Section \ref{subsec:block}, let $m_{n_k/2}^{ij}$, $i,j=0,1$,
denote the number of pairs $(ij)$ among $(x_1x_2)(x_3x_4)\dots$
in the $k$-th embedded coin-tossing game.  
For the shift of one, let
$\tilde m_{n_k/2}^{ij}$ 
denote the number of pairs $(ij)$ among $(x_2x_3)(x_4x_5)\dots$
in the $k$-th embedded coin-tossing game. 
For these counts of pairs, it is more precise to write
$m_{[n_k/2]}^{ij}$ or $\tilde m_{[(n_k-1)/2]}^{ij}$. 
However for simplicity we write $m_{n_k/2}^{ij}$ or 
$\tilde m_{n_k/2}^{ij}$  in the following. We define 
$q_{n_k}^i, q_{n_k}^{ij}$ as in Section 
\ref{subsec:Markov}.

We now give a preliminary consideration on the behavior of counts
$n(\eta_k)$, $q_{n_k}^i$, $q_{n_k}^{ij}$, 
$m_{n_k/2}^{ij}$, $\tilde m_{n_k/2}^{ij}$ for different $k$.
At this point it is helpful to consider properties of fractional Brownian motion
(Chapter 4 of \cite{embrechts-maejima}).
Let $\{B_H(t)\}$  denote the fractional Brownian motion of Hurst index
$H$.  $B_H(t)$ corresponds to $\log S(t)$ in the asset trading game.
$B_H(t)$ is a typical stochastic process with $|{\rm d}B_H(t)|=O(({\rm d}t)^H)$.
$\{B_H(t)\}$ is self-similar, i.e., for every $a>0$ the distribution
of  $\{B_H(at)\}$ coincides with that of $\{a^H  B_H(t)\}$.
This implies that making the grid finer as  $\eta_k \rightarrow
\eta_{k}/2$ is equivalent (in distribution) to increasing $T$ as $T
\rightarrow 2^{1/H} T$.
This suggests that when Market chooses a path $S(t)$ with a fixed exponent
$H$, then 
\begin{equation}
\label{eq:convenient-ass-1}
n_{k+1} \simeq 2^{1/H} n_k.
\end{equation}
Furthermore $\{B_H(t)\}$ has stationary increments, i.e., the
distribution of the increments of $\{B_H(t)\}$ are invariant with
respect to arbitrary time shift.  This corresponds to our assumption of 
homogeneity with respect to the shifts in Section
\ref{sec:comparison-of-rate}.    Under the homogeneity assumption we expect
\begin{equation}
\label{eq:convenient-ass-2}
2 m_{n_k/2}^{ij} \simeq 2 \tilde m_{n_k/2}^{ij} \simeq q_{n_k}^{ij}.
\end{equation}
Also note the following trivial combinatorial relations for any $n$:
\[
q_n^{11}+ q_n^{01} = q_n^1 - x_1, \ 
q_n^{11}+ q_n^{10} = q_n^1 - x_n \quad \text{and}\quad 
|q_n^{01} - q_n^{10}|  = |x_n - x_1| \le 1.
\]

{}From game-theoretic viewpoint Investor can force $H=1/2$.  If
Market chooses a path with $H \neq 1/2$, then the notion of forcing can not
be applied and there is no guarantee that 
(\ref{eq:convenient-ass-1}) and (\ref{eq:convenient-ass-2}) hold.  
However even for $H\neq 1/2$ we use 
(\ref{eq:convenient-ass-1}), (\ref{eq:convenient-ass-2}) as
convenient regularity conditions for evaluating the growth rates of
our strategies in view of the properties of the
fractional Brownian motion.  In Proposition \ref{prop:markov-asset}
and Proposition \ref{prop:block-asset}
below we obtain the growth rate of the first-order Markovian strategy 
and block strategy of length two under these conditions.
In these propositions the approximate
equalities in (\ref{eq:convenient-ass-1}) and (\ref{eq:convenient-ass-2}) are 
understood in the sense that the ratios two sides converge to 1.

Now we state  the following crucial combinatorial fact.
\begin{lemma}  
\label{lem:combinatorial}
For each  path $S(t)$ and for each  $k$ 
\begin{equation*}
m_{n_k/2}^{11}=q_{n_{k-1}}^1, \qquad
m_{n_k/2}^{00}=q_{n_{k-1}}^0.
\end{equation*}
\end{lemma}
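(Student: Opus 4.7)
My plan is to pass to the integer-valued walk $Y_n := (\log S(t_n^{(k)}) - \log S(0))/\eta_k$, which moves by $\pm 1$ at each round of the $k$-th embedded game, with $x_n^{(k)}=1$ iff $Y_n - Y_{n-1}=+1$. Since $\eta_{k-1}=2\eta_k$ and $\log S(0)$ lies on both grids, the level-$(k-1)$ stopping times are precisely those level-$k$ rounds at which $Y$ first attains $Y_{\mathrm{prev}}\pm 2$, where $Y_{\mathrm{prev}}$ is its value at the previous level-$(k-1)$ stopping time. This partitions the level-$k$ trajectory into successive \emph{segments}, each of even length (a net displacement of $\pm 2$ as a sum of $\pm 1$'s forces an even number of summands), with possibly a trailing partial segment at time $T$. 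Because every complete segment has even length, the non-overlapping pair partition $(x_1 x_2)(x_3 x_4)\cdots$ aligns exactly with segment boundaries.

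The central combinatorial claim I would prove is: in a complete segment with exit $+2$, the pairs inside contain exactly one $(1,1)$ and no $(0,0)$; in a complete segment with exit $-2$, exactly one $(0,0)$ and no $(1,1)$. The proof is by induction on the segment length $2m$. The base case $m=1$ is immediate since the segment is either $(1,1)$ or $(0,0)$. For $m\ge 2$, the walk must remain in $\{Y_0^{\mathrm{seg}}-1,Y_0^{\mathrm{seg}},Y_0^{\mathrm{seg}}+1\}$ until the very last step; since $Y_1 = Y_0^{\mathrm{seg}}\pm 1$ and $Y_2 \in \{Y_1-1, Y_1+1\}$ must avoid $Y_0^{\mathrm{seg}}\pm 2$, we are forced to have $Y_2 = Y_0^{\mathrm{seg}}$, so the first pair is $(1,0)$ or $(0,1)$. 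The remaining walk from step $2$ is itself a complete segment of length $2(m-1)$ with the same exit direction, and the inductive hypothesis delivers the conclusion.

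Finally, a trailing partial segment contributes nothing to $m_{[n_k/2]}^{11}$ or $m_{[n_k/2]}^{00}$: within it the walk never leaves $\{Y_0^{\mathrm{seg}}-1, Y_0^{\mathrm{seg}}, Y_0^{\mathrm{seg}}+1\}$, and any pair $(1,1)$ or $(0,0)$ starting at $Y_0^{\mathrm{seg}}$ would instantly carry it to $Y_0^{\mathrm{seg}}\pm 2$, contradicting partiality (recall that within a segment the pair partition starts at $Y_0^{\mathrm{seg}}$). Summing over the complete segments then gives $m_{[n_k/2]}^{11}=q_{n_{k-1}}^1$ and $m_{[n_k/2]}^{00}=q_{n_{k-1}}^0$. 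The main obstacle is simply the careful bookkeeping around pair/segment alignment together with the parity argument; once those are set up, the induction is routine.
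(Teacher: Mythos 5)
Your proof is correct and follows the same route as the paper's: identify each step of the $(k-1)$-st game with a first-exit excursion of the $\eta_k$-walk to $\pm 2$, and observe that such an excursion contributes exactly one aligned $(1,1)$ (resp.\ $(0,0)$) pair. The paper simply declares this correspondence ``obvious''; your parity argument showing that the non-overlapping pair partition aligns with excursion boundaries, the induction giving exactly one $(1,1)$ per upward excursion and no $(0,0)$, and the treatment of the trailing partial segment are precisely the details it omits.
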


\begin{proof}
Consider two nested equi-spaced grids with intervals  $\eta_{k-1}$ and $\eta_k=\eta_{k-1}/2$.
For the grid with the interval
$\eta_{k-1}$, consider a step, where the price is going upward  from $t_i$
to $t_{i+1}$ i.e.\ $\log S(t_{i+1}) = \log S(t_i) + \eta_{k-1}$ in (\ref{eq:2-3}). 
It is obvious that this upward step corresponds exactly to two
consecutive upward steps for the the grid with the interval
$\eta_k$. Therefore
$m_{n_k/2}^{11}=q_{n_{k-1}}^1$.  By counting downward steps, we similarly
obtain
$m_{n_k/2}^{00}=q_{n_{k-1}}^0$.  
\end{proof}

We now derive the growth rate of the first-order Markovian strategy.
Define
\begin{align*}
TV(\eta_k, T) &= \sum_{i=1}^{n_k} |\log S(t_i) - \log S(t_{i-1})| = 
n_k\eta_k = (q_{n_k}^1 + q_{n_k}^0)\eta_k,
\\
L(\eta_k, T) &= \log S(t_{n_k}) - \log S(0) = (q_{n_k}^1 - q_{n_k}^0)\eta_k.
\end{align*}
Under (\ref{eq:convenient-ass-1}),  $n_{k+1} \eta_{k+1} \simeq 2^{1/H-1}n_k \eta_k$.
Therefore  $n_{k+1} \eta_{k+1}\rightarrow\infty$ as
$k\rightarrow\infty$  for the case $H<1$.  On the other hand $S(t_{n_k})
\rightarrow S(T)$ as $k\rightarrow\infty$ and 
$L(\eta_k, T) \rightarrow \log S(T)-\log S(0)$.  Therefore for each
path, 
$L(\eta_k, T)/TV(\eta_k, T) \rightarrow 0$ as  $k\rightarrow\infty$
and this implies that 
\begin{equation}
\label{eq:1/2-limit}
\frac{q_{n_k}^1}{n_k} \rightarrow \frac{1}{2}  \qquad (k \rightarrow \infty).
\end{equation}
Also note that $1/2=\lim_{\delta\rightarrow 0}\rnp_\delta$.

Furthermore by Lemma \ref{lem:combinatorial}, under (\ref{eq:convenient-ass-1}) 
\[
q_{n_k}^{11} \simeq 2 m_{n_k/2}^{11} = 2 q_{n_{k-1}}^1 \simeq n_{k-1}.
\]
Therefore
\[
r_{n_k}^1 = \frac{q_{n_k}^{11}}{q_{n_k}^1} \simeq \frac{
  n_{k-1}}{n_k/2} \simeq \frac{1}{2^{1/H-1}}.
\]
Similarly  $r_{n_k}^0 =q_{n_k}^{01}/q_{n_k}^0 \simeq 1 - 1/2^{1/H-1}$.
Let $\cK_{n_k}^M$ denote the capital of the first-order Markovian strategy
at the end of the $k$-th embedded coin-tossing game.
By (\ref{eq:markov2}) we obtain  the following proposition.

\begin{proposition} 
\label{prop:markov-asset}
Suppose that Market  chooses a path such that 
\[
1=\lim_{k\rightarrow\infty} \frac{n_{k+1}}{2^{1/H}n_k}
=\lim_{k\rightarrow\infty}\frac{2 m_{n_k/2}^{ij}}{q_{n_k}^{ij}} 
=\lim_{k\rightarrow\infty}\frac{2 \tilde m_{n_k/2}^{ij}}{q_{n_k}^{ij}}, \quad
i,j=0,1.
\]
Then 
\[
\lim_{k\rightarrow\infty} \frac{1}{n_k} \log \cK_{n_k}^M=
 D\Big(\frac{1}{2^{1/H-1}} \Big\| \frac{1}{2} \Big) .
\]
\end{proposition}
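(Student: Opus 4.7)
The plan is to apply formula (\ref{eq:markov2}) to the $k$-th embedded coin-tossing game, divide by $n_k$, and let $k \to \infty$ term by term, using Lemma \ref{lem:combinatorial} together with the two regularity hypotheses to identify every empirical frequency with its limit. Specializing (\ref{eq:markov2}) to success probability $\rnp_{\delta_k} = 1/(2+\delta_k)$ with $\delta_k = e^{\eta_k}-1$ gives
\[
\frac{1}{n_k}\log \cK_{n_k}^M = \hat p_{n_k}\, D(r_{n_k}^1 \,\|\, \rnp_{\delta_k}) + (1-\hat p_{n_k})\, D(r_{n_k}^0 \,\|\, \rnp_{\delta_k}) - O\!\left(\frac{\log n_k}{n_k}\right).
\]
Since $\delta_k \to 0$ we have $\rnp_{\delta_k}\to 1/2$; relation (\ref{eq:1/2-limit}) gives $\hat p_{n_k} \to 1/2$; and the $O(\log n_k/n_k)$ term vanishes because $n_k\to\infty$ under the scaling hypothesis. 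Thus the substantive task reduces to identifying $\lim_k r_{n_k}^1$ and $\lim_k r_{n_k}^0$.

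The core computation is for $r_{n_k}^1 = q_{n_k}^{11}/q_{n_k}^1$. Lemma \ref{lem:combinatorial} gives $m_{n_k/2}^{11} = q_{n_{k-1}}^1$; the homogeneity hypothesis says $q_{n_k}^{11}/(2 m_{n_k/2}^{11}) \to 1$; applying (\ref{eq:1/2-limit}) at index $k-1$ gives $q_{n_{k-1}}^1/n_{k-1} \to 1/2$; and the scaling hypothesis says $n_{k-1}/n_k \to 2^{-1/H}$. Chaining these four ratios gives $q_{n_k}^{11}/n_k \to 2^{-1/H}$, hence $r_{n_k}^1 \to 2^{1-1/H} = 1/2^{1/H-1}$. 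For $r_{n_k}^0 = q_{n_k}^{01}/q_{n_k}^0$ I would apply the analogous chain to the downward identity $m_{n_k/2}^{00} = q_{n_{k-1}}^0$, obtaining $q_{n_k}^{00}/n_k \to 2^{-1/H}$, and then invoke the trivial combinatorial relation $q_{n_k}^{01} + q_{n_k}^{00} = q_{n_k}^0 + O(1)$ (which follows from $|q_n^{01}-q_n^{10}|\le 1$ together with $q_n^{10}+q_n^{00} = q_n^0 - x_n$) and $q_{n_k}^0/n_k \to 1/2$ to conclude $r_{n_k}^0 \to 1 - 1/2^{1/H-1}$.

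To close, I would invoke continuity of $(p,q)\mapsto D(p\|q)$ at interior points together with the elementary symmetry $D(p\|1/2) = D(1-p\|1/2)$ (both equal $\log 2 + p\log p + (1-p)\log(1-p)$). Both $D(r_{n_k}^1\|\rnp_{\delta_k})$ and $D(r_{n_k}^0\|\rnp_{\delta_k})$ therefore converge to the common value $D(1/2^{1/H-1}\|1/2)$, and the weights $\hat p_{n_k}\to 1/2$ and $1-\hat p_{n_k}\to 1/2$ combine to give the asserted limit. The main point to verify is merely bookkeeping: the various $O(1)$ boundary discrepancies between counts and the $O(\delta_k)$ drift between $\rnp_{\delta_k}$ and $1/2$ must be shown to be negligible after dividing by $n_k$; since all hypotheses are phrased as ratio convergences and $n_k\to\infty$, this is immediate, and no genuine obstacle arises.
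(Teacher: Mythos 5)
Your proposal is correct and follows essentially the same route as the paper: it combines (\ref{eq:markov2}) with the limit (\ref{eq:1/2-limit}), Lemma \ref{lem:combinatorial}, and the two regularity hypotheses to identify $r_{n_k}^1 \to 1/2^{1/H-1}$ and $r_{n_k}^0 \to 1-1/2^{1/H-1}$, exactly as in the text preceding the proposition. The only additions are cosmetic (the explicit appeal to $D(p\|1/2)=D(1-p\|1/2)$ and the exact boundary bookkeeping, where your identity $q_n^{10}+q_n^{00}=q_n^0-x_n$ has the wrong $O(1)$ term but this is immaterial).
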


We now consider the block strategy of length two.
Let $\cK_{n_k}^B$  denote the capital of block strategy at the end of
the $k$-th embedded coin-tossing game.
By (\ref{eq:block-markov-average}) we know that
$\log \cK_{n_k}^B$ is the average of 
$\log \cK_{n_k}^{M,1}$ and $\log \cK_{n_k}^{M,0}$. However
by (\ref{eq:1/2-limit}) the growth rate of $\cK_{n_k}^{M,0}$ is zero.
Therefore we have the following result.

\begin{proposition} 
\label{prop:block-asset}
Under the same assumption as in Proposition \ref{prop:markov-asset}
\[
\lim_{k\rightarrow\infty} \frac{1}{n_k} \log \cK_{n_k}^B=
 \frac{1}{2}D\Big(\frac{1}{2^{1/H-1}} \Big\| \frac{1}{2} \Big) .
\]
\end{proposition}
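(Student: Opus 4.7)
The plan is to specialize identity (\ref{eq:block-markov-average}) to block length $k=2$, which writes the growth rate of $\cK_{n_k}^B$ as the arithmetic mean of the growth rates of the zeroth- and first-order Markovian strategies $\cK_{n_k}^{M,0}$ and $\cK_{n_k}^{M,1}=\cK_{n_k}^M$. The first-order piece is supplied directly by Proposition \ref{prop:markov-asset}, so the only real work is to verify the shift-homogeneity hypothesis behind (\ref{eq:block-markov-average}) and to check that the zeroth-order piece contributes nothing to the limit.

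For the homogeneity, note that the standing assumption $2m_{n_k/2}^{ij}/q_{n_k}^{ij}\to 1$ and $2\tilde m_{n_k/2}^{ij}/q_{n_k}^{ij}\to 1$ forces both empirical pair distributions $\{m_{n_k/2}^{ij}/(n_k/2)\}$ and $\{\tilde m_{n_k/2}^{ij}/(n_k/2)\}$ to share the common limit $\{q_{n_k}^{ij}/n_k\}$; in particular the total variation distance between the two shift distributions vanishes, which is exactly the shift-homogeneity required in Section \ref{sec:comparison-of-rate}. Thus (\ref{eq:block-markov-average}) with $k=2$ applies and gives
\[
\log \cK_{n_k}^B \;=\; \tfrac{1}{2}\bigl(\log \cK_{n_k}^{M,0}+\log \cK_{n_k}^{M,1}\bigr)\;+\;o(n_k).
\]
For the zeroth-order term, (\ref{eq:bayes1}) yields $\log \cK_{n_k}^{M,0}=n_k D(\bar x_{n_k}\,\|\,\rnp_{\delta_k})-O(\log n_k)$. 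Relation (\ref{eq:1/2-limit}) gives $\bar x_{n_k}\to 1/2$, and $\rnp_{\delta_k}=1/(2+\delta_k)\to 1/2$ since $\delta_k=e^{\eta_k}-1\to 0$; continuity of the Kullback divergence at $(1/2,1/2)$ then makes $D(\bar x_{n_k}\,\|\,\rnp_{\delta_k})\to 0$, so $\log \cK_{n_k}^{M,0}/n_k\to 0$. Substituting this and Proposition \ref{prop:markov-asset} into the displayed identity produces the claimed limit.

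The delicate step is the error accounting. Identity (\ref{eq:block-markov-average}) is only an $o(n_k)$ statement and (\ref{eq:bayes1}) is only an $O(\log n_k)$ statement, so the zeroth-order Kullback term $n_k D(\bar x_{n_k}\,\|\,\rnp_{\delta_k})$ must be shown to be genuinely $o(n_k)$, not merely bounded. This is exactly where the simultaneous convergences $\bar x_{n_k}\to 1/2$ and $\rnp_{\delta_k}\to 1/2$ are essential: if $\bar x_{n_k}$ were bounded away from $1/2$, then $\cK_{n_k}^{M,0}$ would itself grow exponentially and would have to be retained in the final rate rather than discarded, so the asymmetric role of the zeroth-order strategy in the answer is a genuine consequence of $H$-independent drift vanishing on the embedded grid.
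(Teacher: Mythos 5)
Your proposal is correct and follows essentially the same route as the paper: apply (\ref{eq:block-markov-average}) with $k=2$ to write $\log\cK_{n_k}^B$ as the average of $\log\cK_{n_k}^{M,0}$ and $\log\cK_{n_k}^{M,1}$, invoke Proposition \ref{prop:markov-asset} for the first-order term, and kill the zeroth-order term via (\ref{eq:1/2-limit}) together with $\rnp_\delta\to 1/2$. Your extra care in checking that the hypotheses of Proposition \ref{prop:markov-asset} supply the shift-homogeneity needed for (\ref{eq:block-markov-average}), and that the zeroth-order Kullback term is genuinely $o(n_k)$, only makes explicit what the paper leaves implicit.
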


Therefore the growth rate of the block strategy is half of rate of
the Markovian strategy.  $\cK_{n_k}^{M,0}$ is the capital process of
the strategy based only on the past average of Reality's moves
considered in \cite{tkt1}, whose growth rate is zero.  It is of
interest to note that despite this zero growth rate, 
the strategy in \cite{tkt1} was sufficient to force the variation
exponent of two of the Market's path.  
This suggests that looking for a simple strategy for forcing certain
event and looking for a more aggressive strategy with a better growth
rate need different considerations.

\section{Discussions}
\label{sec:discussions}

In this paper we studied multistep Bayesian strategies for
coin-tossing games.  Our general conclusion was that asymptotically we
obtain better growth rates by incorporating larger block
sizes for block strategies or longer orders for Markovian strategies. 
However this conclusion has to be taken with the following cautions.
When the main growth term expressed in terms of the Kullback
divergence is close to zero, we have to compare this to the term
of order $O(\log n)$.  Generally the term of order $O(\log n)$
can be understood as a penalty term for larger models, i.e., for
using strategies incorporating larger blocks. Therefore if the
coin-tossing game is played only a finite number of rounds, or Reality
does not deviate too much from the independent Bernoulli trials, then
it might be advantageous to use shorter block sizes.  This is
essentially the same tradeoff as in  statistical model selection
based on various information criteria.    It is of great interest to
consider selecting among strategies or dynamically
adjusting weights for them.

For convenience we made the assumption of homogeneity for block
strategies in Section \ref{sec:comparison-of-rate} and in Section
\ref{sec:preliminary asset}.  We initially thought that 
homogeneity can be ``forced'' on Reality by appropriate strategies of
Skeptic.  However, when Reality deviates from 
independent Bernoulli trials, the game-theoretic notion of
forcing can not be applied.
Intuitively it seems that Skeptic
can further exploit patterns in Reality's moves when the homogeneity
with respect to shifts does not hold.  However at present it seems 
difficult to formulate results in this direction.

In Section \ref{subsec:universal} we considered an infinite countable
mixture of block strategies and Markovian strategies.  Using this
countable mixture, Skeptic can asymptotically exploit any
deviation of Reality's moves from 
independent
Bernoulli trials.  We 
pointed out that the  idea of the universal source
coding in information theory is similar.  Our result is also very closely
connected to results in algorithmic theory of randomness.  
We can think of each component strategy as a test of randomness of
Reality's moves.
In algorithmic randomness
there are strong computability restrictions on the allowed sample spaces.
In the game-theoretic approach we do not have to worry about
computability and by appropriate discretization it is now possible to 
discuss the randomness of continuous paths.

In Section \ref{subsec:simple}  we only considered
block strategies of length two and first-order Markovian strategies
in the embedded coin-tossing game.   We could obtain
the explicit descriptions for the
growth rates in  Proposition \ref{prop:markov-asset}
and Proposition \ref{prop:block-asset} because of the combinatorial
fact of Lemma \ref{lem:combinatorial}.  It is of interest to
investigate growth rates of higher-order Markovian strategies in the
asset trading game.

For measure-theoretic stochastic processes, the regularity conditions
assumed in
Propositions \ref{prop:markov-asset} and \ref{prop:block-asset} are
basically law of large numbers, and we expect that they hold for
fractional Brownian motions.  However the trading times in
(\ref{eq:2-3})  are stopping
times  and  the fractional Brownian motion for  $H\neq 1/2$ is not a
Markov process.  Therefore it is  not easy to prove that
the regularity conditions hold for  fractional Brownian motions. 

\appendix
\section{Equivalence of Bayesian strategy and prudent strategy 
in coin-tossing games}

Here we establish a one-to-one correspondence between Skeptic's prudent
strategy and  a probability distribution on the set of paths $\Xi$ in the
coin-tossing game.

For one direction suppose that Skeptic models Reality's moves by a
probability distribution $Q$.  Write $\alpha_n = M_n / \cK_{n-1}$.
Given $\cK_{n-1}$ assume that Skeptic tries to maximize the
conditional expected value of $\log \cK_n$.  It is equivalent to
maximizing
\begin{equation}
\label{eq:maximize-log}
 p_n \log (1+\alpha_n(1-\rnp)) + (1-p_n) \log (1-\alpha_n \rnp) 
\end{equation}
with respect to $\alpha_n$,  
where  $p_n = p^Q_n$ is given in (\ref{eq:conditional-probability}).
The maximizing value of $\alpha_n$ is uniquely given as
\[
\alpha_n = \frac{p_n - \rnp}{\rnp(1-\rnp)}.
\]
With this $\alpha_n$,
\[
\cK_n = 
\begin{cases} \cK_{n-1} p_n/\rnp , &  \textrm{if}\ x_n=1 \\
              \cK_{n-1} (1-p_n)/(1-\rnp), & \textrm{if}\ x_n=0.
\end{cases}
\]
Note that $\cK_n=0$ if either $p_n=0$ and $x_n=1$ or $p_n=1$ and
$x_n=0$.  In this case Skeptic can not play any more.  For other cases
he can keep playing the game. It should be noted that this is
consistent with the definition of conditional probability in
(\ref{eq:conditional-probability}), namely, Skeptic can continue the
game if and only if (\ref{eq:conditional-probability}) is defined. We
have shown that a probability distribution $Q$ leads to the strategy
given in (\ref{eq:QtoP}).

For another direction let $\cP$ be  a  prudent strategy of Skeptic.
Starting with the initial capital of $\cK_0=1$,
define
\begin{align*}
Q_1(1)&=\rnp + M_1 \rnp(1-\rnp)=\rnp(1 + M_1 (1-\rnp)), \\
Q_1(0)& = 1-\rnp - M_1 \rnp(1-\rnp)=(1-\rnp) (1- M_1 \rnp).
\end{align*}
Then $Q_1(0)$ and $Q_1(1)$ are non-negative and  $1 = Q_1(0)+
Q_1(1)$.  For the case  
$\cK_{n-1}(\xi^{n-1})>0$
recursively define
\begin{align*}
Q_n(\xi^{n-1}1) & = \rnp Q_{n-1}(\xi^{n-1})
\Big( 1 +  \frac{M_n(\xi^{n-1})}{\cK_{n-1}(\xi^{n-1})}(1-\rnp) \Big),\\
Q_n(\xi^{n-1}0) & = (1-\rnp) Q_{n-1}(\xi^{n-1}) 
\Big( 1- \frac{M_n(\xi^{n-1})}{\cK_{n-1}(\xi^{n-1})}\rnp \Big).
\end{align*}
These are non-negative and satisfy the consistency condition 
(\ref{eq:consistency}). If $\cK_{n-1}(\xi^{n-1})=0$, then
define $0=Q_n(\xi^{n-1}1)=Q_n(\xi^{n-1}0)$, which is
also consistent. By this procedure a Skeptic's prudent strategy leads to a
probability distribution 
$\cP \mapsto Q$.  

By construction it is
obvious  that this map is the inverse map to
(\ref{eq:QtoP})
and  therefore there exists a one-to-one  correspondence between the set of probability distributions and the set of Skeptic's strategies
satisfying the collateral duty.

Finally we state the following Bayesian optimality result, which
follows easily from the maximization in (\ref{eq:maximize-log})

\begin{proposition}  Let $Q$ be a probability distribution on $\Xi$
  and let $\cP$ be the strategy corresponding to $Q$. For any other
  strategy $\tilde \cP$
\[
E^Q (\log \cK_n^{\cP_Q}) \ge E^Q (\log \cK_n^{\tilde \cP}).
\]
\end{proposition}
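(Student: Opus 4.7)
The plan is to prove this by a round-by-round comparison, exploiting the fact already derived in the appendix that $\cP_Q$ is defined precisely so as to maximize the conditional expected log-increment of capital under $Q$.

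First I would write the log-capital as a telescoping sum. For any prudent strategy $\cP$, set $\alpha_n^{\cP}=M_n^{\cP}/\cK_{n-1}^{\cP}$ (a function of $\xi^{n-1}$ only, since the capital process is determined by the strategy and the past moves). Then
\[
\log \cK_n^{\cP} = \sum_{i=1}^n \log\bigl(1+\alpha_i^{\cP}(x_i-\rnp)\bigr),
\]
and similarly for $\tilde \cP$. Taking $E^Q$ and conditioning on $\xi^{i-1}$, the $i$th summand becomes
\[
E^Q\!\left[\log\bigl(1+\alpha_i^{\cP}(x_i-\rnp)\bigr)\,\bigm|\,\xi^{i-1}\right]
= p_i^Q \log\bigl(1+\alpha_i^{\cP}(1-\rnp)\bigr)+(1-p_i^Q)\log\bigl(1-\alpha_i^{\cP}\rnp\bigr),
\]
which is precisely the expression \eqref{eq:maximize-log} as a function of $\alpha_i^{\cP}$.

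Next I would invoke the maximization already carried out in the appendix: for each fixed $\xi^{i-1}$ with $p_i^Q$ well-defined, this expression is uniquely maximized at $\alpha_i=(p_i^Q-\rnp)/(\rnp(1-\rnp))=\alpha_i^{\cP_Q}$. Hence, pointwise in $\xi^{i-1}$,
\[
E^Q\!\left[\log\bigl(1+\alpha_i^{\cP_Q}(x_i-\rnp)\bigr)\,\bigm|\,\xi^{i-1}\right]
\ge E^Q\!\left[\log\bigl(1+\alpha_i^{\tilde\cP}(x_i-\rnp)\bigr)\,\bigm|\,\xi^{i-1}\right].
\]
Taking outer $E^Q$ and summing over $i=1,\dots,n$ yields the desired inequality $E^Q(\log \cK_n^{\cP_Q})\ge E^Q(\log \cK_n^{\tilde\cP})$.

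The main obstacle is the handling of paths on which some $\cK_{i-1}$ vanishes, so that $\log \cK_n = -\infty$ or $\alpha_i$ is undefined. For $\cP_Q$ itself, the explicit formula \eqref{eq:capital-Q} shows $\cK_n^{\cP_Q}(\xi^n)>0$ exactly when $Q(\xi^n)>0$, so the left-hand side is well defined by restricting the expectation to $Q$-positive paths (on $Q$-null paths the contribution to $E^Q$ is zero by convention). On the right-hand side, any $\tilde\cP$ with vanishing capital on a $Q$-positive set only pushes $E^Q(\log \cK_n^{\tilde\cP})$ toward $-\infty$, strengthening the inequality; and on $Q$-null paths the contribution again vanishes. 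Thus the pointwise maximization argument extends to the full expectation and the proposition follows.
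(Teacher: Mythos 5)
Your proof is correct and follows essentially the same route as the paper, which simply observes that the proposition ``follows easily from the maximization in (\ref{eq:maximize-log})''; your telescoping decomposition of $\log\cK_n$ into conditional expected log-increments and the pointwise optimality of $\alpha_i^{\cP_Q}$ is exactly the argument the paper leaves implicit. The additional care you take with paths where the capital vanishes is a sensible elaboration and does not change the substance.
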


\end{document}